\newcommand{\ve}[1]{{\bf #1}}
\newcommand{\mode}{\mbox{mod }}
\newcommand{\vol}{\mbox{Vol }}
\newcommand{\norm}[1]{\left|\left|#1\right|\right|}
\newcommand{\prob}[1]{\text{Pr}\left(#1\right)}
\newtheorem{lemma}{Lemma}
\newtheorem{theorem}{Theorem}
\newtheorem{remark}{Remark}
\begin{document}

\sloppy
\IEEEoverridecommandlockouts
\title{Asymmetric Compute-and-Forward with CSIT}

 \author{
   \IEEEauthorblockN{Jingge Zhu and Michael Gastpar}
   \IEEEauthorblockA{School of Computer and Communication Sciences, EPFL\\
   Lausanne, Switzerland\\
     Email: \{jingge.zhu, michael.gastpar\}@epfl.ch} 
 \thanks{This work was supported in part by the European ERC
Starting Grant 259530-ComCom.}}
\maketitle

\begin{abstract}
We present a modified compute-and-forward scheme which utilizes Channel State Information at the Transmitters (CSIT) in a natural way. The modified scheme allows different users to have different coding rates, and use CSIT to achieve larger rate region. This idea is applicable to all systems which use the compute-and-forward technique and can be arbitrarily better than the regular scheme in some settings.
\end{abstract}

\section{Introduction}
The compute-and-forward scheme \cite{NazerGastpar_2011} is a novel coding scheme for Gaussian networks which takes advantage of the linear structure of lattice codes and the additive nature of  Gaussian interference networks. The main idea of compute-and-forward is to decode linear combinations of messages rather than the messages themselves at the receivers. One nice feature of this scheme is that the Channel State Information is not explicitly required at the transmitters, making this scheme attractive to practical considerations. But on the other hand, it is not clear how CSI can be used at the transmitters if the compute-and-forward idea is to be applied. Furthermore, the same lattice code is used for every user, preventing the scheme from exploiting  asymmetries of the networks.

In this work we present a modified compute-and-forward scheme with asymmetric message rates which makes CSIT useful. This scheme also extends the concept of the computation rate to a more general definition of the computation rate tuple, which allows flexibility in controlling the individual message rates of different users.  The main idea relies on the observation, roughly speaking, that the transmitted lattice codeword does not have to lie in the lattice which is used for lattice coding at the transmitters.

We use the notation $[a:b]$ to denote a set of increasing integers $\{a,a+1,\ldots,b\}$, $\log$ to denote $\log_2$ and $\log^+(x)$to denote the function $\max\{\log(x),0\}$.  We also use $x_{1:K}$ to denote a set of numbers $\{x_1,x_2,\ldots,x_K\}$.

\section{problem statement }
We consider a interference network with $K$ transmitters and $M$ relays. The discrete-time real Gaussian channel has the following vector representation
\begin{IEEEeqnarray*}{rCl}
\ve y_m=\sum_{k=1}^Kh_{mk}\ve x_k+\ve z_m,\quad m\in[1:M]
\end{IEEEeqnarray*}
with $\ve y_m\in\mathbb R^n, \ve x_k\in\mathbb R^n, h_{mk}\in\mathbb R$ denoting the channel output of relay $m$, channel input of transmitter $k$ and the channel gain, respectively. The Gaussian white noise with unit variance is denoted by $\ve z_m\in\mathbb R^n$. We impose the same  power constraint $\mathbb E\{\norm{\ve x_k}^2\}\leq nP$ on all the transmitters.
 
The message of user $k$ is represented by a point in $\mathbb R^n$ denoted by $\ve t_k$, which is an element of the codebook $\mathcal C_k$ of user $k$ with \textit{message rate} $R_k:=\frac{1}{n}\log|\mathcal C_k|$.

Each transmitter is equipped with an encoder $\mathcal E_k$ which maps its message into the channel input as $\ve x_k=\mathcal E_k(\ve t_k)$.  Each relay $m$ has a decoder $\mathcal D_m$ which uses the channel output $\ve y_m$ to decode a function of all the messages $\ve t_k, k\in[1:K]$ as $f_m(\ve t_{[1:K]})=\mathcal D_m(\ve y_m)$.  Here we only consider the function of the form 
$f_m(\ve t_{[1:K]})=\sum_{k=1}^Ka_{mk}\ve t_k$
with  integer $a_{mk}$ for all $m\in[1:M],k\in[1:K]$.  We use $\ve a_m$ to denote the column vector $[a_{m1},\ldots,a_{mK}]^T$. 
 
We say a \textit{computation rate tuple} $(R_1,\ldots,R_K)$ \textit{with respect to the function} $f_m$ is achievable, if the relay $m$ can decode the function $f_m$ reliably, namely $ \prob{\mathcal D_m(\ve y_m)\neq f_m(\ve t_{[1:K]})}<\delta$ for any $\delta>0$, with $R_k$ being the message rate of the user $k$. In the network, we require all the relays to decode their desired functions. We say \textit{a computation rate tuple $(R_1,\ldots, R_K) $ w. r. t. the set of functions $f_m,m\in[1:M]$} is achievable, if
$\prob{\mathcal D_m(\ve y_m)\neq f_m(\ve t_{[1:K]}), \text{ for all }m\in[1:M] }<\delta$
holds for any $\delta>0$ with $R_k$ being the message rate of user $k$. In the following we will study the computation rate tuple achieved by a modified compute-and-forward scheme.

\section{Lattice codes construction}
A lattice $\Lambda$ is a discrete subgroup of $\mathbb R^n$ with the property that if $\ve t_1, \ve t_2\in \Lambda$, then $\ve t_1+\ve t_2\in \Lambda$.  More details about lattice and lattice codes can be found in \cite{ErezZamir_2004}. Define the lattice quantizer $Q_{\Lambda}: \mathbb R^n\rightarrow\Lambda$ as
$Q_{\Lambda}(\ve x)=\mbox{argmin}_{\ve t\in\Lambda}\norm{\ve t-\ve x}$
and define the fundamental Voronoi region of the lattice to be
$\mathcal V:=\{\ve x\in\mathbb R^n:Q_{\Lambda}(\ve x)=\ve 0\}$.
The modulo operation gives the quantization error:
$[\ve x]\mode\Lambda=\ve x-Q_{\Lambda}(\ve x)$.
Two lattices $\Lambda$ and $\Lambda'$ are said to be nested if $\Lambda'\subseteq\Lambda$. 

Let $\Lambda_1,\ldots,\Lambda_M$ be $M$ nested lattice codes constituting a nested lattice chain in which all lattices  are simultaneously good in the sense of \cite{ErezZamir_2004}.  This chain can be constructed as shown in \cite{Nam_etal_2010} and the order of the chain will be determined later. Relay $m$ will perform the lattice decoding with respect to the lattice $\Lambda_m$.

Let $\beta_1,\ldots,\beta_K$ be $K$ positive numbers. We can construct $K$ nested lattices  such that $\Lambda_k^{s}\subseteq\Lambda_{c}$ for all $k$ where $\Lambda_{c}$ denotes the coarsest lattice among $\Lambda_1,\ldots,\Lambda_M$.  We  let $\Lambda_k^{s}$  to be simultaneously good and with second moment 
$\frac{1}{n\vol(\mathcal V_k^s)}\int_{\mathcal V_k^{s}}\norm{\ve x}^2d\ve x=\beta_k^2 P$
where $\mathcal V_k^{s}$ denotes the Voronoi region of the lattice $\Lambda_k^{s}$ for $k\in[1:K]$.  The lattice $\Lambda_k^s$ is used as the shaping region for the codebook of user $k$. 

For each transmitter $k$, we construct the codebook
\begin{IEEEeqnarray}{rCl}
\mathcal C_k&=&\Lambda_{m(k)}\cap\mathcal V_k^s
\label{eq:codebook}
\end{IEEEeqnarray}
where $m(k)\in\{1,\ldots,M\}$ hence $\Lambda_{m(k)}$ is the decoding lattice at one of the $M$ relays. We will determine which decoding lattice to choose for transmitter $k$, i.e., the expression of $m(k)$,  in the next section.  The message rate of user $k$ is
\begin{IEEEeqnarray}{rCl}
R_k=\frac{1}{n}\log|\mathcal C_k|=\frac{1}{n}\log\frac{\vol(\mathcal V_k^s)}{\vol(\mathcal V_{m(k)})}
\label{eq:message_rate}
\end{IEEEeqnarray}
where $\mathcal V_{m(k)}$ is the Voronoi region of the fine lattice $\Lambda_{m(k)}$.

\section{A modified compute-and-forward scheme}
When the message (codeword) $\ve t_k$ is given to encoder $k$, it forms its channel input as follows
\begin{align*}
\ve x_k=\left[\ve t_k/\beta_k+\ve d_k\right]\mode\Lambda_k^s/\beta_k
\end{align*}
where $\ve d_k$ is called a \textit{dither} which is a random vector uniformly distributed in the scaled Voronoi region $\mathcal V_k^s/\beta_k$. As pointed out in \cite{ErezZamir_2004}, $\ve x_k$ is independent from $\ve t_k$ and also uniformly in $\Lambda_k^s/\beta_k$ hence  has average power $P$ for all $k$.

To demonstrate the proposed approach, we first assume there is only one relay, $m=M=1$. For now there is only one decoding lattice hence the codebooks of all the users are constructed using the same fine lattice and we denote it as $\Lambda_{m(k)}=\Lambda$ for all $k$.
\begin{theorem}
Assume there is only one relay $m$. For any given set of positive numbers $\beta_1,\ldots,\beta_K$, there exists lattice codes $\mathcal C_1,\ldots,\mathcal C_K$ such that the achievable computation rate tuple $(R_1,\ldots,R_K)$  with respect to the function $f_m=\sum_ka_{mk}\ve t_k$ at  relay $m$ is given by
\begin{IEEEeqnarray}{rCl}
&R_k&<r_k(\ve h_m,\ve a_m,\beta_{1:K})\nonumber\\
&:=&\left[\frac{1}{2}\log \left(\norm{\ve{\tilde a}_m}^2-\frac{P(\ve h_m^T\ve{\tilde a}_m)^2}{1+P\norm{\ve h_m}^2}\right)^{-1}+\frac{1}{2}\log \beta_k^2\right]^+
\label{eq:compute_rate}
\end{IEEEeqnarray}
for all $k$ with $\ve{\tilde a}_m:=[\beta_1a_{m1},...,\beta_Ka_{mK}]$ and $a_{mk}\in\mathbb Z$  for all $k\in[1:K]$.
\label{thm:R_comp_OneFun}
\end{theorem}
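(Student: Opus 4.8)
\begin{IEEEproof}[Proof sketch]
The plan is to adapt the lattice decoding argument of the original compute-and-forward scheme \cite{NazerGastpar_2011} to the asymmetric construction, the new ingredients being the per-user scaling $\beta_k$ and the distinct shaping lattices $\Lambda_k^s$. First I would invoke the existence of simultaneously good nested lattices \cite{Nam_etal_2010} to obtain the decoding lattice $\Lambda$ together with shaping lattices $\Lambda_k^s\subseteq\Lambda$ whose second moment equals $\beta_k^2P$, and set $\mathcal C_k=\Lambda\cap\mathcal V_k^s$ as in \eqref{eq:codebook}. With the channel input written as $\ve x_k=[\ve t_k/\beta_k+\ve d_k]\mode\Lambda_k^s/\beta_k$, the crypto lemma \cite{ErezZamir_2004} guarantees that $\ve x_k$ is uniform on $\mathcal V_k^s/\beta_k$ and that the distribution of $\ve x_k$ does not depend on $\ve t_k$; since $\mathcal V_k^s$ has second moment $\beta_k^2P$, scaling by $1/\beta_k$ makes the per-dimension power of $\ve x_k$ exactly $P$, so the power constraint is met for every user.

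Next, the relay scales $\ve y_m$ by a coefficient $\alpha$ and subtracts the scaled dithers, forming $\ve y'=\alpha\ve y_m-\sum_k\tilde a_{mk}\ve d_k$ with $\tilde a_{mk}=\beta_k a_{mk}$. Substituting the definition of $\ve x_k$ and collecting terms, I would write
\begin{IEEEeqnarray*}{c}
\ve y'=\ve v+\ve n_{\mathrm{eff}},\qquad \ve v:=\sum_k a_{mk}\left(\ve t_k-\beta_k\ve q_k\right)\in\Lambda,
\end{IEEEeqnarray*}
where $\beta_k\ve q_k\in\Lambda_k^s\subseteq\Lambda$ is the quantization point produced by the modulo operation, so that $\ve v$ is a genuine point of $\Lambda$, and $\ve n_{\mathrm{eff}}=\sum_k(\alpha h_{mk}-\tilde a_{mk})\ve x_k+\alpha\ve z_m$. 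The dithers make the law of $\ve n_{\mathrm{eff}}$ independent of the messages, and its per-dimension variance is $P\norm{\alpha\ve h_m-\ve{\tilde a}_m}^2+\alpha^2$. Minimizing this convex function over $\alpha$ (the MMSE choice $\alpha=P\ve h_m^T\ve{\tilde a}_m/(1+P\norm{\ve h_m}^2)$) gives
\begin{IEEEeqnarray*}{rCl}
\sigma_{\mathrm{eff}}^2&=&P\left(\norm{\ve{\tilde a}_m}^2-\frac{P(\ve h_m^T\ve{\tilde a}_m)^2}{1+P\norm{\ve h_m}^2}\right).
\end{IEEEeqnarray*}

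Since $\Lambda$ is good for AWGN and $\ve n_{\mathrm{eff}}$ is semi norm-ergodic, lattice decoding $\hat{\ve v}=Q_\Lambda(\ve y')$ recovers $\ve v$, and hence the combination $f_m=\sum_k a_{mk}\ve t_k$, with vanishing error probability as $n\to\infty$ provided $\vol(\mathcal V)^{2/n}$ exceeds the threshold $2\pi e\,\sigma_{\mathrm{eff}}^2$. For the rate I would use second-moment goodness, which gives $\frac1n\log\vol(\mathcal V_k^s)\to\frac12\log(2\pi e\beta_k^2P)$, while pushing $\vol(\mathcal V)^{2/n}$ down to the decoding threshold yields $\frac1n\log\vol(\mathcal V)\to\frac12\log(2\pi e\,\sigma_{\mathrm{eff}}^2)$. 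Substituting both into \eqref{eq:message_rate} produces $R_k\to\frac12\log(\beta_k^2P/\sigma_{\mathrm{eff}}^2)$, and inserting $\sigma_{\mathrm{eff}}^2$ is exactly \eqref{eq:compute_rate}; the clipping at zero reflects that a nontrivial codebook needs $\vol(\mathcal V)\le\vol(\mathcal V_k^s)$.

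I expect the main obstacle to be the decoding and function-recovery step under heterogeneous shaping. One must verify that a single decoding lattice $\Lambda$ can be matched simultaneously to all the $\beta_k$-dependent rates at the common threshold $2\pi e\,\sigma_{\mathrm{eff}}^2$, that lattice decoding against the non-Gaussian effective noise succeeds there, and that recovering $\ve v\in\Lambda$ indeed pins down $f_m$ despite the quantization contributions $\beta_k\ve q_k$ lying in different lattices $\Lambda_k^s$. Making precise the sense in which the dithers decouple $\ve n_{\mathrm{eff}}$ from the codewords after the asymmetric scaling is where the technical care is concentrated.
\end{IEEEproof}
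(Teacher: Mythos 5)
Your proposal is correct and follows essentially the same route as the paper: the same dithered transmission, the same decomposition of $\alpha\ve y_m-\sum_k\beta_ka_{mk}\ve d_k$ into a point of $\Lambda$ plus an effective noise that is decoupled by the dithers, the same MMSE choice of $\alpha$, and the same volume-to-noise-ratio condition combined with the quantization goodness of the $\Lambda_k^s$ to obtain \eqref{eq:compute_rate}. The final recovery of $\sum_ka_{mk}\ve t_k$ from $\sum_ka_{mk}\tilde{\ve t}_k$, which you rightly flag as the delicate step, is handled in the paper by the same one-to-one-mapping observation you allude to.
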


\begin{proof} At the decoder we form
\begin{IEEEeqnarray*}{rCl}
\tilde {\ve y}_m&:=&\alpha_m\ve y_m-\sum_ka_{mk}\beta_k\ve d_k\\
&=&\sum_k a_{mk}\left(\beta_k (\ve t_k/\beta_k+\ve d_k)-\beta_k Q_{\Lambda_k^s/\beta_k}(\ve t_k/\beta_k+\ve d_k)\right)\\
&&-\sum_ka_{mk}\beta_k\ve d_k+\tilde{\ve  z}_m\\
&\stackrel{(a)}{=}&\tilde {\ve z}_m+\sum_k a_{mk} (\ve t_k-Q_{\Lambda_k^s}(\ve t_k+\beta_k\ve d_k))\\
&:=&\tilde {\ve z}_m+\sum_k a_{mk} \tilde{\ve t}_k
\end{IEEEeqnarray*}
with $\tilde {\ve t}_k:=\ve t_k-Q_{\Lambda_k^s}(\ve t_k+\beta_k\ve d_k)$ and the equivalent noise 
\begin{IEEEeqnarray*}{rCl}
\tilde {\ve z}_m:=\sum_k(\alpha_m h_{mk}-a_{mk}\beta_k)\ve x_k+\alpha_m\ve z_m
\end{IEEEeqnarray*}
which is independent of $\sum_k a_{mk}\tilde{\ve t}_k$ since all $\ve x_k$ are independent of $\sum_k a_{mk}\tilde {\ve t}_k$ thanks to the dithers $\ve d_k$. The step $(a)$ follows because it holds $Q_{\Lambda}(\beta  X)=\beta Q_{\frac{\Lambda}{\beta}}(X)$ for any $\beta\neq 0$. 
Notice we have $\tilde {\ve t}_k\in\Lambda$ since  $\ve t_k\in\Lambda$ and $\Lambda_k^s\subseteq\Lambda$ due to the code construction.  Hence the linear combination $\sum_k a_{mk}\tilde{\ve t}_k$ along belongs to the decoding lattice $\Lambda$.

The relay uses lattice decoding to decode $\sum_ka_{mk}\tilde{\ve t}_k$ with respect to the decoding lattice $\Lambda$ by quantizing $\tilde {\ve y}_m$ to its nearest neighbor in  $\Lambda$. The decoding error probability  is equal to the probability that the equivalent noise $\tilde {\ve z}_m$ leaves the Voronoi region surrounding the lattice point representing $\sum_ka_{mk}\tilde{\ve t}_k$. If the fine lattice $\Lambda$ used for decoding is good for AWGN channel, as it is shown in \cite{ErezZamir_2004}, the probability $\prob{\tilde {\ve z}_m\notin \mathcal V}$ goes to zero exponentially if 
\begin{IEEEeqnarray}{rCl}
\frac{\vol(\mathcal V)^{2/n}}{N_m}> 2\pi e
\label{eq:volume_to_noise}
\end{IEEEeqnarray}
where $N_m:=\mathbb E\norm{\tilde {\ve z}_m}^2/n=\norm{\alpha_m\ve h-\tilde{\ve a}_m}^2P+\alpha_m^2$ denotes the average power per dimension of the equivalent noise. Recall that the shaping lattice $\Lambda_k^s$ is good for quantization hence we have 
\begin{IEEEeqnarray}{rCl}
\vol(\mathcal V_k^s)=\left(\frac{\beta_k^2P}{G(\Lambda_k^s)}\right)^{n/2}
\end{IEEEeqnarray}
with $G(\Lambda_k^s)2\pi e<(1+\delta)$ for any $\delta>0$ if $n$ is large enough \cite{ErezZamir_2004}. Together with the message rate expression in (\ref{eq:message_rate}) (here $\Lambda_{m(k)}=\Lambda$ for all $k$) we can see that lattice decoding is successful if $\beta_k^2P2^{-2R_k}/G(\Lambda_k^s)>2\pi e N_m$ for every $k$
or equivalently
\begin{IEEEeqnarray*}{rCl}
R_k<\frac{1}{2}\log\left(\frac{P}{N_m}\right)+\frac{1}{2}\log\beta_k^2-\frac{1}{2}\log(1+\delta)
\end{IEEEeqnarray*}
By choosing $\delta$ arbitrarily small and optimizing over $\alpha_m$ we conclude that under the  rate constraints in (\ref{eq:compute_rate}) the lattice decoding of $\sum_ka_k\tilde{\ve t}_k$ will be successful.  Finally, since there is a one-to-one mapping between $\tilde {\ve t}_k$ and $\ve t_k$ when the dithers $\ve d_k$ are known,  we can also recover $\sum_ka_k\ve t_k$.  It is easy to see from the expression of the computation rate tuple in (\ref{eq:compute_rate}), that  multiplying all $\beta_k$ with a same constant will not change the result.
\end{proof}

We see the main difference to the regular compute-and-forward scheme is that here the transmitted signal $\ve x_k$ contains the scaled version, $\ve t_k/\beta_k$,  of the codeword while the receivers still perform the lattice decoding w. r. t. the lattice $\Lambda$ in which $\ve t_k$ lies. We should choose $\beta_k$  appropriately according to the function $\ve a$ and the channel $\ve h$ to obtain the best rate region.

Now we extend the result to allow all relays to be able to decode their desired linear functions. 
\begin{theorem}
For any given set of positive numbers $\beta_1,\ldots,\beta_K$, there exist lattice codes $\mathcal C_1,\ldots,\mathcal C_K$ such that the achievable computation rate tuple $(R_1,\ldots,R_K)$  with respect to the set of functions $f_m=\sum_ka_{mk}\ve t_k$, $m\in[1:M]$,  where $f_m$ is desired by relay $m$ with $a_{mk}\in\mathbb Z$, is given by
\begin{IEEEeqnarray*}{rCl}
R_k&<&\min_{m\in[1:M]}r_k(\ve h_m,\ve a_m,\beta_{1:K})
\end{IEEEeqnarray*}
where $r_k(\ve h_m,\ve a_m,\beta_{1:K}) $ is defined in (\ref{eq:compute_rate}).
\end{theorem}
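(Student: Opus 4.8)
The plan is to reduce the claim to $M$ parallel applications of Theorem~\ref{thm:R_comp_OneFun}, one per relay, tied together by a single nested lattice chain; the only genuinely new ingredients are the construction of that chain and the rule assigning a decoding lattice to each user. First, for each relay $m$ I would fix $\ve a_m$ and optimize $\alpha_m$ exactly as in Theorem~\ref{thm:R_comp_OneFun}, producing the equivalent noise power $N_m=\norm{\alpha_m\ve h_m-\tilde{\ve a}_m}^2P+\alpha_m^2$. Reindexing the relays so that $N_1\ge N_2\ge\cdots\ge N_M$, I would invoke the construction of \cite{Nam_etal_2010} to obtain a chain $\Lambda_1\subseteq\Lambda_2\subseteq\cdots\subseteq\Lambda_M$ of simultaneously good lattices with $\vol(\mathcal V_m)^{2/n}\to 2\pi e N_m$, so that relay $m$ decodes with respect to the lattice $\Lambda_m$ whose Voronoi cell is just large enough for its own noise level; here $\Lambda_c=\Lambda_1$ is the coarsest member. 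The shaping lattices $\Lambda_k^s$ are taken good for quantization with second moment $\beta_k^2P$ and $\Lambda_k^s\subseteq\Lambda_c$, which is consistent whenever $\beta_k^2P>N_1$, i.e. in the regime where $r_k>0$; the degenerate case is absorbed by the $[\,\cdot\,]^+$ exactly as before.

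Next I would fix the map $m(k)$. Writing $S_k:=\{m:a_{mk}\neq0\}$ for the relays whose function actually involves user $k$, I would set $\Lambda_{m(k)}:=\Lambda_{\min S_k}$, the coarsest decoding lattice among the relays in $S_k$ (equivalently the one with the largest $N_m$), since this is the finest lattice the structure permits and therefore maximizes $R_k=\tfrac1n\log(\vol(\mathcal V_k^s)/\vol(\mathcal V_{m(k)}))$. The crucial structural step is to verify that with this choice each relay still sees its combination inside its own decoding lattice. As in Theorem~\ref{thm:R_comp_OneFun} one has $\tilde{\ve t}_k=\ve t_k-Q_{\Lambda_k^s}(\ve t_k+\beta_k\ve d_k)\in\Lambda_{m(k)}$, because $\ve t_k\in\Lambda_{m(k)}$ and $\Lambda_k^s\subseteq\Lambda_c\subseteq\Lambda_{m(k)}$. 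For any relay $m$ and any $k$ with $a_{mk}\neq0$ we have $m\in S_k$, hence $m\ge\min S_k$ and $\Lambda_{m(k)}=\Lambda_{\min S_k}\subseteq\Lambda_m$ by the nesting, so $a_{mk}\tilde{\ve t}_k\in\Lambda_m$; the terms with $a_{mk}=0$ vanish, and summing gives $\sum_k a_{mk}\tilde{\ve t}_k\in\Lambda_m$, which is precisely the property that makes lattice decoding at relay $m$ meaningful.

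With this in hand the per-relay analysis is verbatim that of Theorem~\ref{thm:R_comp_OneFun}: relay $m$ forms $\tilde{\ve y}_m=\alpha_m\ve y_m-\sum_k a_{mk}\beta_k\ve d_k=\tilde{\ve z}_m+\sum_k a_{mk}\tilde{\ve t}_k$, the dithers render $\tilde{\ve z}_m$ independent of the combination, and since $\Lambda_m$ is good for AWGN the error $\prob{\tilde{\ve z}_m\notin\mathcal V_m}$ vanishes because $\vol(\mathcal V_m)^{2/n}>2\pi e N_m$. Substituting $\vol(\mathcal V_{m(k)})^{2/n}\to 2\pi e N_{m(k)}$ and $\vol(\mathcal V_k^s)^{2/n}\to 2\pi e\beta_k^2P$ into the rate formula gives $R_k\to\tfrac12\log(\beta_k^2P/N_{m(k)})=\min_{m\in S_k}r_k(\ve h_m,\ve a_m,\beta_{1:K})$, which is at least $\min_{m\in[1:M]}r_k$; hence every tuple in the stated region is attainable. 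A union bound over relays, $\prob{\exists\,m:\mathcal D_m(\ve y_m)\neq f_m}\le\sum_{m=1}^M\prob{\mathcal D_m(\ve y_m)\neq f_m}\to0$, shows that all $M$ functions are decoded simultaneously, establishing the claim. I expect the main obstacle to be not the noise analysis, which is inherited from Theorem~\ref{thm:R_comp_OneFun}, but arranging all requirements inside a single chain at once—simultaneous goodness of every $\Lambda_m$, the $M$ prescribed Voronoi volumes $2\pi e N_m$, and the total ordering that realizes the assignment $m(k)$—and checking that the ordering induced by the noise levels is compatible with the inclusions demanded by every user's support set $S_k$.
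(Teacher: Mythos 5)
Your proof is correct and follows essentially the same route as the paper: a single nested chain of decoding lattices ordered by the effective noise powers $N_m$ (larger $N_m$ coarser), each user's fine lattice chosen as the coarsest one it must be compatible with, and the per-relay error analysis inherited from Theorem~\ref{thm:R_comp_OneFun} together with a union bound over relays. Your only deviation is assigning $m(k)$ by minimizing over the support set $S_k=\{m:a_{mk}\neq 0\}$ rather than over all of $[1:M]$ as the paper does; this is a harmless and in fact slightly sharper refinement, since it yields $R_k<\min_{m\in S_k}r_k(\ve h_m,\ve a_m,\beta_{1:K})\geq\min_{m\in[1:M]}r_k(\ve h_m,\ve a_m,\beta_{1:K})$, and your nesting check confirms that every relay still finds its desired combination inside its own decoding lattice.
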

\begin{proof}
Relay $m$  decodes the function $f_m$ with its decoding lattice $\Lambda_m$. The nested structure of fine lattices $\Lambda_m$ ensures that the sum of codewords seen at relay $m$ lies in the decoding lattice $\Lambda_m$. As in Theorem \ref{thm:R_comp_OneFun}, the lattice decoding is successful if the volume-to-noise ratio of the decoding lattice satisfies equation (\ref{eq:volume_to_noise}).  Hence each relay imposes a constraint on the individual message rate, i.e., for all $k$, we have $R_k\leq r_k(\ve h_m,\ve a_m,\beta_{1:K})$ for all $m$. If all relays want to decode successfully, each transmitter has to construct its codebook such that it meets the above constraints at all relays. Therefore when the codebook is constructed as in (\ref{eq:codebook}), the fine lattice $\Lambda_{m(k)}$ for $\mathcal C_k$ should be such that the message rate $R_k$ does not exceed $\min_{m\in[1:M]}r_k(\ve h_m,\ve a_m,\beta_{1:K})$, i.e., 
$m(k)=\text{argmin}_{m\in[1:M]}r_k(\ve h_m,\ve a_m,\beta_{1:K})$.  The noise variance $N_m$ at each relay determines the order of the lattice chain involving $\Lambda_m$: larger $N_m$ corresponds to a coarser lattice.
\end{proof}

\begin{remark}
The original scheme in \cite{NazerGastpar_2011} is a special case of this modified scheme with $\beta_k=1$ for all $k$.  In \cite{NazerGastpar_2011}, all message rates are forced to be the same, called \textit{computation rate} at this relay. The modified scheme  allows for different message rates among users and leads to the more general definition \textit{computation rate tuple}. We shall see that this asymmetry on message rates can be beneficial in various scenarios.
\end{remark}

\begin{remark}
The modified scheme extends naturally to the case when transmitters have different power constraints, and in general achieves larger computation rate region.
\end{remark}

\section{Examples}

\textbf{Example 1: The multiple access channel (MAC).}

We consider a $2$-user Gaussian MAC where the receiver wants to decode a linear function of the two messages. Figure \ref{fig:2UserMAC} shows the achievable rate regions.

\begin{figure}[hb!]
\centering
\includegraphics[scale=0.48]{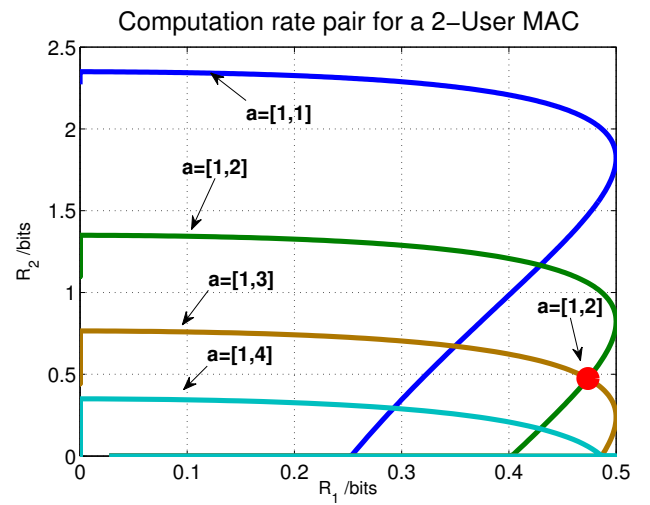}
\caption{We consider a 2-user Gaussian MAC with channel coefficients $\ve h=[1,5]$ and power $P=1$ where the receiver decodes one linear function. Here we show four achievable computational rate pair regions $(R_1,R_2)$ of four different linear functions marked in the plot. For each function, by adjusting parameters $\beta_1$ and $\beta_2$ we can achieve different points on the curve. The red dot indicates the equal rate pair achieved with the best coefficients ($\ve a=[1,2]$ in this case) using the regular compute-and-forward, given in \cite[Thm. 4]{NazerGastpar_2011}.}
\label{fig:2UserMAC}
\end{figure}

\textbf{Example 2: Transmitters with different powers.}

We consider the Gaussian two-way relay channel shown in Figure \ref{fig:system_TWR}, which is studied in \cite{Nam_etal_2010}, \cite{wilson_joint_2010}. Two encoders have different power constraints $P_1$ and $P_2$ and the channel gain from both transmitters is $1$. The relay has power constraint $P_R$. All noises are Gaussian with unit variance.

\begin{figure}[hb!]
\centering
\includegraphics[scale=0.5]{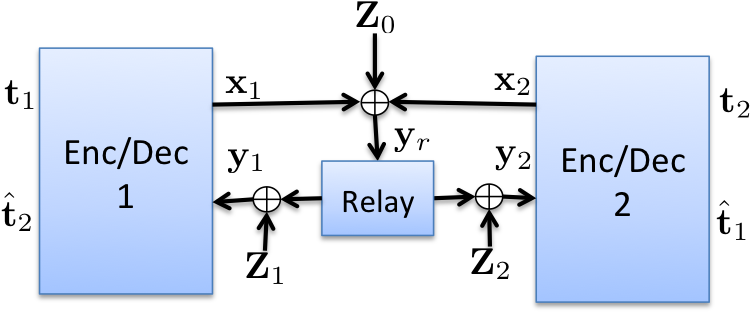}
\caption{A Gaussian two-way relay channel.}
\label{fig:system_TWR}
\end{figure}

Already shown in \cite{Nam_etal_2010}, \cite{wilson_joint_2010}, it can be beneficial for the relay to decode a linear combination of the two messages rather than decoding the two messages individually. They give the  following achievable rate for this network
\begin{subequations}
\begin{align}
R_1\leq &\min\left\{\frac{1}{2}\log^+\left(\frac{P_1}{P_1+P_2}+P_1\right),\frac{1}{2}\log(1+P_R)\right\}\\
R_2\leq &\min\left\{\frac{1}{2}\log^+\left(\frac{P_2}{P_1+P_2}+P_2\right),\frac{1}{2}\log(1+P_R)\right\}
\end{align}
\label{eq:TWR_regular}
\end{subequations}
where the relay decodes the function $\ve t_1+\ve t_2$ and broadcasts it to  two decoders. With the modified compute-and-forward scheme we also ask the relay to decode a linear combination of the form $\sum_{k=1}^2 a_k\ve t_k$ where $a_1,a_2\neq 0$, with which each decoder can solve for the desired message. We can show the following achievable rate region:
\begin{IEEEeqnarray*}{rCl}
R_1&\leq &\min\left\{\frac{1}{2}\log^+\left(\frac{P_1\beta_1^2}{\tilde N(\beta_{1:2})}\right),\frac{1}{2}\log \frac{\beta_1^2P_1(1+P_R)}{P_R}\}\right\}\\
R_2&\leq &\min\left\{\frac{1}{2}\log^+\left(\frac{P_2\beta_2^2}{\tilde N(\beta_{1:2})}\right),\frac{1}{2}\log \frac{\beta_2^2P_2(1+P_R)}{P_R}\}\right\}
\end{IEEEeqnarray*}
where
\begin{IEEEeqnarray*}{rCl}
\tilde N(\beta_{1:2}):=\frac{P_1P_2(a_1\beta_1-a_2\beta_2)^2+(a_1\beta_1)^2P_1+(a_2\beta_2)^2P_2}{P_1+P_2+1}
\end{IEEEeqnarray*}
for any positive $\beta_1,\beta_2$ satisfying $\max\{\beta_1^2P_1,\beta_2^2P_2\}\leq P_R$.  Figure \ref{fig:TWR} shows the achievable rate region. 
\begin{figure}[h!]
\centering
\includegraphics[scale=0.48]{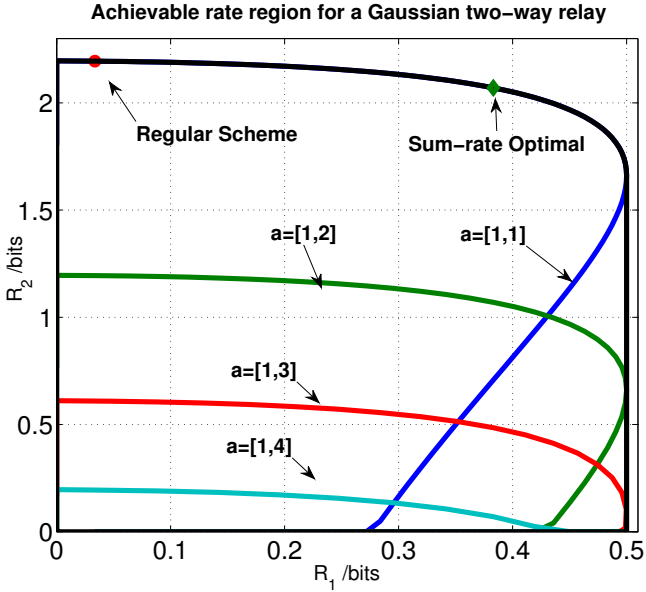}
\caption{Achievable rate region for the Gaussian two-way relay in Figure \ref{fig:system_TWR} with unequal power constraints $P_1=1$, $P_2=20$ and equal channel gain $\ve h=[1,1]$. The relay has power $P_R=20$.  Color curves show different achievable rate region when the relay decodes different linear functions as marked in the plot. The red dot denotes the achievable rate pair given in (\ref{eq:TWR_regular}) when relay decodes $\ve t_1+\ve t_2$ using regular compute-and-forward (other function will give worse rate pair). Notice this point is not sum-rate optimal. The achievable rate region given by the black convex hull is strictly larger than the regular scheme since the CSI can be used at the transmitters.}
\label{fig:TWR}
\end{figure}

\textbf{Example 3: The MIMO integer-forcing linear receiver.}

We now apply the same idea to the MIMO system with an integer-forcing  linear receiver \cite{Zhan_etal_2010}. We consider a point-to-point MIMO system with channel matrix $\ve H\in\mathbb R^{M\times K}$ which is full rank. It is shown in \cite{Zhan_etal_2010} that the following rate is achievable using integer-forcing receiver
\begin{IEEEeqnarray*}{rCl}
R_{IF}&\leq&\min_{m\in[1:k]}K (-\frac{1}{2}\log\ve a_m^T\ve V\ve D\ve V^T\ve a_m)
\end{IEEEeqnarray*}
for any  full rank integer matrix $\ve A\in\mathbb Z^{K\times K}$ with its $m$-th row as $\ve a_m$ and $\ve V\in\mathbb R^{K\times K}$ is composed of the eigenvectors of $\ve H^T\ve H$. The matrix $\ve D\in\mathbb R^{K\times K}$ is diagonal with element $\ve D_{k,k}=1/(P\lambda_k^2+1)$ and $\lambda_k$ is the $k$-th singular value of $\ve H$.

Applying the modified compute-and-forward to the integer-forcing receiver gives the following result. We note that a similar idea also appears in \cite{ordentlich_precoded_2013} where a pre-coding matrix is used at the encoder.
\begin{theorem}
For a $K\times M$ real MIMO system with full rank channel matrix $
\ve H\in\mathbb R^{M\times K}$, the following rate is achievable using an integer-forcing linear receiver for any $\beta_1,\ldots,\beta_K$
\begin{IEEEeqnarray}{rCl}
R_{mIF}&\leq&\sum_{k=1}^K\min_{m\in[1:K]} \Bigg(-\frac{1}{2}\log\frac{\tilde{ \ve a}_m^T\ve V\ve D\ve V^T\tilde {\ve a}_m}{\beta_k^2}\Bigg)
\label{eq:Rate_IntegerForcing}
\end{IEEEeqnarray}
for any  full rank $\ve A\in\mathbb Z^{K\times K}$ with its $m$-th row being $\ve a_m$. We have $\ve{\tilde a}_m:=[\beta_1a_{m1},...,\beta_Ka_{mK}]$ for $m=1,\ldots, K$ and $\ve V, \ve D$ defined as above.
\end{theorem}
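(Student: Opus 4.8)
The plan is to view the integer-forcing linear receiver as an instance of the multi-relay modified compute-and-forward scheme of the preceding theorem, with the single physical receiver playing the role of $K$ virtual relays, the $m$-th of which is tasked with decoding the integer linear combination $f_m=\sum_k a_{mk}\ve t_k$. Since $\ve A$ is full rank over $\mathbb Z$ (i.e. $\det\ve A\neq 0$), recovering all $K$ combinations $f_1,\ldots,f_K$ lets the receiver invert $\ve A$ and solve for every codeword $\ve t_k$; hence the total achievable rate is $R_{mIF}=\sum_{k=1}^K R_k$, and it remains to determine the largest admissible $R_k$. Each transmitter uses the modified encoder $\ve x_k=[\ve t_k/\beta_k+\ve d_k]\mode\Lambda_k^s/\beta_k$ of the previous sections, so that $\ve x_k$ has power $P$ and, after dither removal, the scaled point $\tilde{\ve t}_k\in\Lambda$ carries the effective coefficient $a_{mk}\beta_k$. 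The one genuine difference from Theorem \ref{thm:R_comp_OneFun} is that the receiver here observes the full vector $\ve y=\ve H\ve x+\ve z$ and, to extract $f_m$, applies a linear equalizer $\ve b_m$ rather than a scalar $\alpha_m$. Writing $\ve H=[\ve h_1,\ldots,\ve h_K]$ column-wise, the equivalent noise for the $m$-th combination is $\tilde{\ve z}_m=\ve b_m^T\ve z+\sum_k(\ve b_m^T\ve h_k-a_{mk}\beta_k)\ve x_k$, which by the same dithering argument as in Theorem \ref{thm:R_comp_OneFun} is independent of $\sum_k a_{mk}\tilde{\ve t}_k$ and has per-dimension variance $N_m=\norm{\ve b_m}^2+P\norm{\ve H^T\ve b_m-\tilde{\ve a}_m}^2$.

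Next I would minimize $N_m$ over $\ve b_m$, which is the main computational step and the place where the vector equalizer must do the work that the scalar $\alpha_m$ did before. Setting the gradient to zero gives $\ve b_m^\star=P(\ve I_M+P\ve H\ve H^T)^{-1}\ve H\tilde{\ve a}_m$, and substituting back, the matrix inversion lemma collapses the expression via $\ve I_K-P\ve H^T(\ve I_M+P\ve H\ve H^T)^{-1}\ve H=(\ve I_K+P\ve H^T\ve H)^{-1}$ to $N_m^\star=P\,\tilde{\ve a}_m^T(\ve I_K+P\ve H^T\ve H)^{-1}\tilde{\ve a}_m$. Diagonalizing $\ve H^T\ve H=\ve V\,\mathrm{diag}(\lambda_1^2,\ldots,\lambda_K^2)\,\ve V^T$ turns $(\ve I_K+P\ve H^T\ve H)^{-1}$ into $\ve V\ve D\ve V^T$ with $\ve D_{k,k}=1/(1+P\lambda_k^2)$, so that $N_m^\star=P\,\tilde{\ve a}_m^T\ve V\ve D\ve V^T\tilde{\ve a}_m$. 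This is the MIMO analogue of the optimized noise $P(\norm{\tilde{\ve a}_m}^2-P(\ve h_m^T\tilde{\ve a}_m)^2/(1+P\norm{\ve h_m}^2))$ sitting inside the logarithm of (\ref{eq:compute_rate}), with the scalar combiner replaced by the full MMSE equalizer and the single channel vector by the channel matrix.

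Finally I would reuse the lattice-decoding argument of Theorem \ref{thm:R_comp_OneFun} verbatim: because $\Lambda$ is good for AWGN and $\tilde{\ve z}_m$ has per-dimension power $N_m^\star$, the $m$-th virtual relay recovers $\sum_k a_{mk}\tilde{\ve t}_k$, and hence $f_m$, reliably whenever the volume-to-noise condition (\ref{eq:volume_to_noise}) holds, which yields the per-stream constraint $R_k<\frac{1}{2}\log(P/N_m^\star)+\frac{1}{2}\log\beta_k^2=-\frac{1}{2}\log(\tilde{\ve a}_m^T\ve V\ve D\ve V^T\tilde{\ve a}_m/\beta_k^2)$. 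Since all $K$ virtual relays must decode and stream $k$ is carried by a single fine lattice $\Lambda_{m(k)}$ chosen to meet the tightest constraint, its rate is limited by $R_k<\min_{m\in[1:K]}(-\frac{1}{2}\log(\tilde{\ve a}_m^T\ve V\ve D\ve V^T\tilde{\ve a}_m/\beta_k^2))$, exactly as in the multi-relay theorem, and summing over $k$ gives (\ref{eq:Rate_IntegerForcing}). The hard part will be the equalizer optimization together with the matrix-inversion-lemma and eigendecomposition simplification; once $N_m^\star$ is in the clean form $P\,\tilde{\ve a}_m^T\ve V\ve D\ve V^T\tilde{\ve a}_m$ the remainder follows by invoking the lattice machinery already established, and setting $\beta_k=1$ recovers the symmetric integer-forcing rate of \cite{Zhan_etal_2010} as a special case.
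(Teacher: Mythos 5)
Your proposal is correct and follows exactly the route the paper intends: the paper states this theorem without an explicit proof, presenting it as a direct application of the modified compute-and-forward scheme to the integer-forcing receiver of \cite{Zhan_etal_2010}, which is precisely your virtual-relay reduction. Your MMSE equalizer optimization yielding $N_m^\star=P\,\tilde{\ve a}_m^T\ve V\ve D\ve V^T\tilde{\ve a}_m$ and the per-stream constraint $R_k<-\frac{1}{2}\log(\tilde{\ve a}_m^T\ve V\ve D\ve V^T\tilde{\ve a}_m/\beta_k^2)$ are the standard computations that the paper leaves implicit, and they check out.
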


In Figure \ref{fig:Integer_forcing} we compare the achievable rates of two schemes. 

\begin{figure}[h!]
\centering
\includegraphics[scale=0.36]{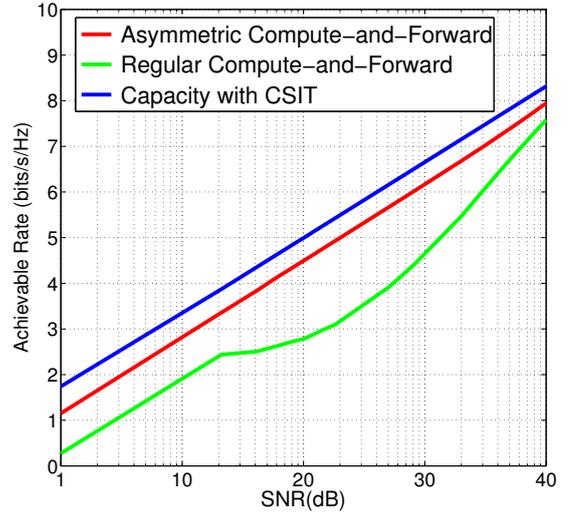}
\caption{Achievable rates for a $2\times 2$ MIMO system $\ve H=[0.7, 1.3; 0.8, 1.5]$.  At $\text{SNR}=40 \text{dB}$, the best coefficients for regular scheme are $\ve a_1=[1, 2]$ and $\ve a_2=[7, 13]$, while for the modified scheme we have the best parameters as $\beta_1=1, \beta_2=4.887, \ve a_1=[8, 3] $ and $\ve a_2=[13, 5]$.  }
\label{fig:Integer_forcing}
\end{figure}

We give another example where the modified scheme performs arbitrarily better than the regular scheme. Consider the $2\times 2$ MIMO channel with channel matrix $\ve H=\begin{bmatrix}
1 &1\\
0 &\epsilon
\end{bmatrix}$ where $0<\epsilon<1$.  It has been shown in \cite[Section V, C] {Zhan_etal_2010} that the achievable rate of integer forcing is upper bounded as $R_{IF}\leq \log(\epsilon^2 P)$ which is of order $O(1)$ if $\epsilon\sim\frac{1}{\sqrt{P}}$ while the joint ML decoding can achieve a rate at least $\frac{1}{2}\log(1+2P)$. With the modified scheme we can show the following result.
\begin{lemma}
For the channel $\ve H$ above, $R_{mIF}$ in (\ref{eq:Rate_IntegerForcing}) scales as $\log P$ for any $\epsilon>0$.
\end{lemma}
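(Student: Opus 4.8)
The plan is to exhibit an explicit choice of the free parameters $\beta_{1:2}$ and the integer matrix $\ve A$ that drives the right-hand side of (\ref{eq:Rate_IntegerForcing}) up to $\log P$, and to control everything with exact inequalities so that no limiting argument is needed. First I would record the algebra of this channel: $\ve H^T\ve H=\left[\begin{smallmatrix}1&1\\1&1+\epsilon^2\end{smallmatrix}\right]$ has determinant $\epsilon^2$, so $(\ve H^T\ve H)^{-1}=\frac{1}{\epsilon^2}\left[\begin{smallmatrix}1+\epsilon^2&-1\\-1&1\end{smallmatrix}\right]$, and for any $\ve w=[w_1,w_2]$ one gets the convenient quadratic form $\ve w^T(\ve H^T\ve H)^{-1}\ve w=(w_1-w_2)^2/\epsilon^2+w_1^2$.

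Next I would eliminate $\ve V\ve D\ve V^T$ in favour of $\frac{1}{P}(\ve H^T\ve H)^{-1}$ without any approximation. Since $\ve D_{k,k}=1/(P\lambda_k^2+1)\le 1/(P\lambda_k^2)$ and $\ve V$ is orthogonal with $(\ve H^T\ve H)^{-1}=\ve V\,\mathrm{diag}(1/\lambda_k^2)\,\ve V^T$, conjugation preserves the diagonal ordering, yielding the positive-semidefinite relation $\ve V\ve D\ve V^T\preceq\frac{1}{P}(\ve H^T\ve H)^{-1}$ and hence the exact bound $\tilde{\ve a}_m^T\ve V\ve D\ve V^T\tilde{\ve a}_m\le\frac{1}{P}\big[(\beta_1a_{m1}-\beta_2a_{m2})^2/\epsilon^2+\beta_1^2a_{m1}^2\big]$ for every $m$. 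This is exactly the direction needed to lower-bound each term of (\ref{eq:Rate_IntegerForcing}).

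The heart of the argument is the choice of parameters, which amounts to a one-line Diophantine alignment. I would take $\beta_1=1$, $\beta_2=\epsilon$, and $\ve A=\left[\begin{smallmatrix}1&N\\0&1\end{smallmatrix}\right]$ with $N$ the nearest integer to $1/\epsilon$, so that $\ve A$ is unimodular (hence full rank) and $|1-\epsilon N|\le\epsilon/2$. Then the row $\ve a_1=[1,N]$ gives $(1-\epsilon N)^2/\epsilon^2+1\le\tfrac54$, while $\ve a_2=[0,1]$ gives $\epsilon^2/\epsilon^2=1$, so both quadratic forms are bounded by $5/4$ uniformly in $\epsilon$. Conceptually, $\beta_2=\epsilon$ compensates the weak channel direction while $N\approx1/\epsilon$ makes $\beta_1a_{11}$ and $\beta_2a_{12}$ nearly cancel; this is precisely the rounding of the ill-conditioned lattice of $(\ve H^T\ve H)^{-1}$ that the regular scheme with $\beta_k\equiv1$ cannot perform.

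Finally I would substitute $\tilde{\ve a}_m^T\ve V\ve D\ve V^T\tilde{\ve a}_m\le\frac{5}{4P}$ for both $m$ into (\ref{eq:Rate_IntegerForcing}): the $k=1$ term is at least $\frac12\log(4P/5)$ and the $k=2$ term, carrying $\beta_2^2=\epsilon^2$, at least $\frac12\log(4\epsilon^2P/5)$, so summing gives $R_{mIF}\ge\log(\tfrac45\epsilon P)$. For any fixed $\epsilon>0$ this equals $\log P+O(1)$ and therefore scales as $\log P$; moreover, because the quadratic-form bound $5/4$ and the semidefinite inequality are uniform in $\epsilon$ and $P$, the estimate recovers the $\Theta(\log P)$ behaviour of joint ML decoding even in the coupled regime $\epsilon\sim1/\sqrt{P}$, where the regular bound $\log(\epsilon^2P)$ collapses to $O(1)$. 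I expect the only genuinely delicate point to be guessing the pair $(\beta_{1:2},\ve A)$; once it is fixed, the positive-semidefinite inequality makes every remaining estimate exact and elementary.
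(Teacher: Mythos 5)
Your proof is correct, and it reaches the same intermediate quantity as the paper -- the quadratic form $(\beta_1 a_{m1}-\beta_2 a_{m2})^2/\epsilon^2+\beta_1^2a_{m1}^2$ obtained from $(\ve H^T\ve H)^{-1}$ -- but it handles the two key steps differently. First, you justify the passage from $\ve V\ve D\ve V^T$ to $\frac{1}{P}(\ve H^T\ve H)^{-1}$ via the semidefinite ordering $\ve D\preceq\mathrm{diag}(1/(P\lambda_k^2))$, which cleanly supplies a bound the paper only asserts ("we can show"). Second, and more substantively, where the paper invokes a non-constructive simultaneous Diophantine approximation result (existence of $\beta_2<Q$ and integers $a_{m1}$ with $|a_{m2}\beta_2-a_{m1}|<Q^{-1/2}$, then $Q\sim\epsilon^{-2}$, giving $\beta_2,a_{m1}\sim\epsilon^{-2}$), you exploit the specific structure of this $2\times 2$ channel to make an explicit choice: $\beta_2=\epsilon$ and the unimodular $\ve A=\left[\begin{smallmatrix}1&N\\0&1\end{smallmatrix}\right]$ with $N=\mathrm{round}(1/\epsilon)$, so that $|1-\epsilon N|\le\epsilon/2$ bounds both quadratic forms by $5/4$ uniformly in $\epsilon$. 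This buys you a closed-form lower bound $R_{mIF}\ge\log(\tfrac45\epsilon P)$ with explicit constants, valid for all $P$ and $\epsilon$ simultaneously (hence also covering the regime $\epsilon\sim 1/\sqrt P$ where it still gives $\Theta(\log P)$), whereas the paper's argument is asymptotic and only shows the scaling. The trade-off is that the paper's Diophantine route generalizes beyond this particular $\ve H$, while your construction is tailored to it; for the lemma as stated, your version is the more elementary and more informative proof. All the individual steps check out: $\ve A$ is unimodular, $N\ge 1$ for $\epsilon\in(0,1)$, and both logarithms in the final sum are positive once $P>5/(4\epsilon^2)$, so no $\log^+$ issue arises for fixed $\epsilon$ and large $P$.
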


To see this, we can show (assuming w. l. o. g. $\beta_1=1$)
\begin{IEEEeqnarray*}{rCl}
R_{mIF}&\geq& \min_{m=1,2}\frac{1}{2}\log^+\left(\frac{P}{a_{m1}^2+(a_{m2}\beta_2-a_{m1})^2\frac{1}{\epsilon^2}}\right)\\
&&+\min_{m=1,2}\frac{1}{2}\log^+\left(\frac{\beta_2^2P}{a_{m1}^2+(a_{m2}\beta_2-a_{m1})^2\frac{1}{\epsilon^2}}\right)
\end{IEEEeqnarray*}
Based on the standard results on simultaneous Diophantine approximation \cite{Cassels_1957}, for any given $a_{m2}$ and $Q>0$ there exists $\beta_2<Q$ and $a_{m1}$ such that $|a_{m2}\beta_2-a_{m1}|<Q^{-1/2}$ for $m=1,2$. Hence the we have the achievable rate
\begin{IEEEeqnarray*}{rCl}
\min_{m=1,2}\frac{1}{2}\log^+\left(\frac{P}{a_{m1}^2+Q^{-1}\frac{1}{\epsilon^2}}\right)+\min_{m=1,2}\frac{1}{2}\log^+\left(\frac{\beta_2^2P}{a_{m1}^2+Q^{-1}\frac{1}{\epsilon^2}}\right)
\end{IEEEeqnarray*}
If we choose $Q\sim\epsilon^{-2}$, and notice that we also have $\beta_2, a_{m1}\sim Q$, then the second term above scales as $\frac{1}{2}\log P$ for $P$ large. Consequently $R_{mIF}$ also scales as $\frac{1}{2}\log P$ for any $\epsilon$, hence can be arbitrarily better than the regular scheme.


\bibliographystyle{IEEEtran}
\bibliography{IEEEabrv,CF_CSIT_IZS}

\end{document}